\newcommand{\shortv}[1]{}
\newcommand{\defin}[1]{\textbf{#1}}
\newcommand{\lthen}{\rightarrow}
\newcommand{\liff}{\leftrightarrow}
\newcommand{\dimp}{\Leftrightarrow}
\newcommand{\citeyear}{\cite}
\newcommand{\sel}{\textit{sel}}
\newcommand{\val}[1]{[\![ #1 ]\!]}
\newcommand{\commentout}[1]{}
\newcommand{\mult}[1]{\{\!\!\{ #1 \}\!\!\}}
\renewcommand{\phi}{\varphi}
\newcommand{\dop}{\mathit{do}}
\newcommand{\noop}{\mathit{no\mbox{-}op}}
\newcommand{\id}{\mathit{id}}
\renewcommand{\L}{\mathcal{L}}
\newcommand{\A}{\mathcal{A}}
\newtheorem{theorem}{Theorem}
\title{Sequential Language-based Decisions}
\author{Adam Bjorndahl
\institute{Department of Philosophy\\
Carnegie Mellon University\\
Pittsburgh, USA}
\email{abjorn@cmu.edu}
\and
Joseph Y. Halpern
\institute{Department of Computer Science\\
Cornell Univeristy\\
Ithaca, USA}
\email{halpern@cs.cornell.edu}
}
\begin{document}

\maketitle

\begin{abstract}
%joe1
%In recent work we introduced the framework of \emph{language-based
%decisions}, the core idea of which was to adapt Savage's classical
  %decision-theoretic framework \cite{Savage} by re-imagining the
%structure of \emph{actions}. Rather than taking them to be functions
%from \emph{states} to \emph{outcomes}, as they are defined
%classically, actions are instead presented via descriptions in some
%formal language specifying what effects will be brought about under 
%what circumstances.
In earlier work, we introduced the framework of \emph{language-based
decisions}, the core idea of which was to modify Savage's classical
decision-theoretic framework \cite{Savage} by taking actions to be  
%joe2
%descriptions in some language of what should be done, rather than
descriptions in some language, rather than
functions from states to outcomes, as they are defined
classically. Actions had
the form ``if $\psi$ then 
$\dop(\phi)$'', where $\psi$ and $\phi$ were formulas in some
underlying language, 
specifying what effects would be brought about under what circumstances.
The earlier work allowed only one-step actions. But, in practice,
plans are typically composed of a sequence of steps.  Here, we extend
the earlier framework to \emph{sequential} actions, making it much
more broadly applicable. 
%The present work extends this framework to the domain of
%\emph{sequential} actions. This is a natural extension not 
%only because taking actions in succession is part and parcel of
%joe1*: I cut this material about the technical level, because I
%didn't really understand what you hand in mind, Adam.
%everyday life, but also because on a technical level the
%language-based framework identifies the state and outcome spaces,
%making it straightforward to define action sequencing via function
%composition.
Our technical contribution is a representation theorem in
the classical spirit: agents whose preferences over actions satisfy
certain constraints can be modeled as if they are expected utility
maximizers. As in the earlier work, due to the language-based
specification of the actions, the representation theorem requires a
construction not only of the probability and utility functions
representing the agent's beliefs and preferences, but also the state
and outcomes spaces over which these are defined, as
well as a ``selection function'' which intuitively captures how agents
disambiguate coarse descriptions. The (unbounded) depth of action
sequencing adds substantial interest (and complexity!) to the proof. 
\end{abstract}

%motivation
\section{Background and motivation}

In earlier work, we introduced the framework of \emph{language-based decisions}
\cite{BH21}, the core idea of which was to modify Savage's classical
%joe1: changed as in abstract
%decision-theoretic framework \cite{Savage} by re-imagining the
%structure of \emph{actions}. Rather than taking them to be functions
%from \emph{states} to \emph{outcomes}, as they are defined
%classically, actions are instead presented via descriptions in some
%formal language specifying what effects will be brought about under
%what circumstances.
decision-theoretic framework \cite{Savage} by taking actions to be  
%joe2
%descriptions in some language of what should be done, rather than
descriptions in some language, rather than
functions from states to outcomes, as they are defined
classically. Actions had the form ``if $\phi$ then 
$\dop(\psi)$'', where $\phi$ and $\psi$ were formulas in some
underlying language, specifying what effects would be brought about
under what circumstances.% 
\footnote{This work in turn extended previous work by Blume,
Easley, and Halpern \cite{BEH06} in which the tests in actions, but not the
effects of actions, were specified in a formal language.}
%joe1
%More concretely, a formal statement like ``If $\phi$ then do $\psi$,
%else do $\chi$'' would be an action in this framework.
For example, a statement like ``If there is a budget surplus then
$\dop(MW = 15)$ else $\noop$'' would be an action in this framework,
where $MW = 15$ represents  the minimum wage being \$15, and $\noop$
is the action of doing nothing.
%adam2: the above is slightly misleading because we didn't have no-op
%in the previous paper, but maybe this is not really important (we
%used do(true) and assumed centering).
%joe2: I think it's fine to leave it as is
The effect of the action $\dop(MW=15)$ is to bring about a state where
the minimum wage is \$15.  But this does not completely specify the
state. (Do businesses close? Is there more automation so jobs are
lost? Are no jobs lost and more people move into the state?)

In this context, we proved a representation theorem in the classical
spirit: agents whose preferences over actions satisfy certain
constraints can be modeled as if they are expected utility
maximizers. This requires constructing not only probability and
utility functions (as is done classically), but also the state and
outcome spaces on which these functions are defined, and a
\emph{selection function} that describes which state will result from
an underspecified action like $\dop(MW=15)$. In this
construction the state and
%adam6: fixed typo per reviewer 3
%outcomes spaces coincide; intuitively, this
outcome spaces coincide; intuitively, this
%joe1: 
%corresponds to the fact that descriptions in the same language are
%used to specify both the conditions for executing an action (``If
%$\phi$ then...''), corresponding to events in the state space, and
%also the effects of actions (``...do $\psi$''), corresponding to
%events in the outcome space.
is because the tests that determine whether an action is
performed (``If $\phi$ then...'') and the actions themselves
(``$\dop(\psi)$'') are described using the same language.
%used to specify both the conditions for executing an action (``If
%$\phi$ then...''), corresponding to events in the state space, and
%also the effects of actions (``...do $\psi$''), corresponding to
%events in the outcome space. 

%joe1
%The present work extends this framework to the domain of
%\emph{sequential} actions. This is a natural extension in several
%respects. Conceptually, of course it is often the case that agents
%take actions in sequence or are faced with temporally extended choice
%contexts. Do I prefer to walk to the cafe and then call my friend if
%the cafe is open, or would it be better to call my friend first, then
%walk to the cafe and call them back if it's closed? Should I ring the
%doorbell once, or ring it once and then a second time if no one
%replies to the first? 
The earlier work allowed only one-step actions. But, in practice,
plans are typically composed of a sequence of steps, and we must
choose among such plans: Do I prefer to walk to the cafe and then
call my friend if the cafe is open, or would it be better to call my
friend first, then walk to the cafe and call them back if it's closed?
Should I ring the doorbell once, or ring it once and then a second
time if no one replies to the first? 
Here, we extend the earlier framework to \emph{sequential} actions,
making it much more broadly applicable.

At a technical level, a decision-theoretic framework in which the state and outcome spaces coincide is the perfect setting in which to implement sequential actions, since---given that the actions are understood as functions---we have an immediate and natural way to ``put them in sequence'', namely, by composing the corresponding functions.

%joe1; this seems somewhat redundant
%Finally, the general motivation for reasoning in a language-based
%setting---namely, bypassing the need to exogenously specify the state
%and outcome spaces, thereby bringing our models into closer contact
%with how agents actually perceive and reason about the world---is just
%as relevant in the sequential setting as it was in the previous,
%static framework. 

Our contribution in this paper is, first, to lay the mathematical
groundwork for reasoning about sequential, language-based actions
(Section \ref{sec:slba}), and second, to prove a representation
theorem analogous to earlier such results (Section \ref{sec:rep}):
roughly speaking, agents whose preferences over sequential actions
satisfy certain axioms can be understood as if their preferences are
derived by maximizing the expected value of a suitable utility
%joe1: I initially thought it would be straightforward!
%function. Perhaps unsurprisingly, proving this result is substantially
%adam2: hahaha fair point!
function. Proving this result is substantially
harder in the present setting, owing to the more complex nature of
sequential actions (including but not limited to the fact that we
allow sequential nesting to be arbitrarily deep). The reader is thus
forewarned that the main result depends on a fairly lengthy,
multi-stage proof.

%sequential language-based action
\section{Sequential language-based actions} \label{sec:slba}

The framework presented in this section is an expansion of that
developed in \cite{BH21}. We begin with the same simple, formal
language: let $\Phi$ denote a finite set of \emph{primitive
propositions}, and $\L$ the propositional language consisting of all
Boolean combinations of these primitives. A \defin{basic model (over
  $\L$)} is a tuple $M = (\Omega, \val{\cdot}_{M})$ where $\Omega$ is
a nonempty set of \emph{states} and $\val{\cdot}_{M}: \Phi \to
2^{\Omega}$ is a \emph{valuation function}. This valuation is
recursively extended to all formulas in $\L$ in the usual way, so that
intuitively, each formula $\phi$ is ``interpreted'' as the ``event''
$\val{\phi}_{M} \subseteq \Omega$. We sometimes drop the subscript
when the model is clear from context, and write $\omega \models \phi$
rather than $\omega \in \val{\phi}$. We say that $\phi$ is \emph{satisfiable
in $M$} if $\val{\phi}_{M} \neq \emptyset$ and that $\phi$ is
\emph{valid in $M$} if $\val{\phi}_{M} = \Omega$; we write $\models
\phi$ to indicate that $\phi$ is valid in all basic models. 
%Finally, we define the \emph{theory of $\omega$ (in $M$)} to be the set of all formulas true at $\omega$, denoted $Th(\omega) = \{\phi \: : \: \omega \models \phi\}$, and write $\omega \equiv \omega'$ iff $Th(\omega) = Th(\omega')$.

Given a finite set of formulas $F \subseteq \L$, the set of \defin{(sequential) actions (over $F$)}, denoted by $\A_{F}$, is defined recursively as follows:
\begin{enumerate}[(1)]
\item
for each $\phi \in F$, $\dop(\phi)$ is an action (called a \emph{primitive action});
\item
$\noop$ is an action (this is short for ``no operation''; intuitively, it is a ``do nothing'' action);
\item
for all $\psi \in \L$ and $\alpha, \beta \in \A_{F}$, not both
%joe1: don't like the term "condition".  Larry, David, and I called
%them "tests", which I prefer.  I would also be OK with test
%conditions.  
%$\noop$, \textbf{if $\psi$ then $\alpha$ else $\beta$} is an action
%($\psi$ is called a \emph{condition} for this action);
$\noop$, \textbf{if $\psi$ then $\alpha$ else $\beta$} is an action;
%adam1: in your definition of depth, you rule out anything of the form
%$if \psi no-op else no-op$ and $no-op;no-op$... I'm not entirely
%clear on why, but I've built that into the definition of actions,
%here...
%joe1: As I recall, it made things technically simpler not to allow
%them, but I can't remember why offhand
\item
  for all $\alpha, \beta \in \A_{F}$, not both $\noop$,
%joe1: \alpha;\beta is standard, while first \alpha, then \beta is
%not.  I think we should just cut it
%  \textbf{first $\alpha$, then $\beta$} is an action (we abbreviate
%  this by ``$\alpha;\beta$'', following a notational convention
  %  common in dynamic logics for sequencing).
  $\alpha;\beta$ is an action (intuitively, this is the action ``do $\alpha$ and
  then do $\beta$''). 
\end{enumerate}
In \cite{BH21}, actions were defined only by clauses (1) and (3).
%adam6: added this in response to a comment from reviewer 3; I think https://plato.stanford.edu/archives/spr2015/entries/logic-dynamic/ is a good general reference here
The idea of ``sequencing'' actions is of course not new;
%joe5: I'm pretty sure that the semicolon notation is much older than
%dynamic logic.  Dyanamic logic took it from standard programming
%language notation.  It was used in Algol and C, I believe, and is
%also used in Java.
%we borrow the semicolon notation from propositional dynamic logic.
the semicolon notation is standard in programming languages.
%adam4: it might be nice to add an example or two here at some point, but i don't think it's essential and we're out of space anyway

It will also be useful for our main result to have a notion of the \emph{depth} of an action, which intuitively should capture how deeply nested the sequencing is. We do so by induction. The only \defin{depth-$0$} action is $\noop$.
%adam2*: I added this back in because without it we miss all the
%do(\phi) actions! 
A \defin{depth-$1$ action} is either (1) $\noop$; (2) a primitive
action $\dop(\phi)$; or (3) an action of the form \textbf{if $\psi$ then
$\alpha$ else $\beta$}, where $\alpha$ and $\beta$ are depth-$1$ 
%joe2*: I don't think we we want to allow no-op; \alpha as a depth-1
%action (it's OK as a depth-2 action).  If we do allow it, then it
%follows that no-op;no-op;...;no-op;\alpha is depth-1, for arbitrarily
%many noops.  This semes ugly (and wold require some changes below, I
%believe.
%adam3: ok, I think I see now why this makes things simpler (even if it
%is a bit counterintuitive)
%actions; or (4) an action of the form $\alpha;\noop$ or
%$\noop;\alpha$, where $\alpha$ is a depth-$1$ action.
actions.
%adam1: I took out the above line because it seems redundant. but it's
%actually not quite... is it okay for depth-1 actions to include
%$do(phi);no-op$ (and other depth-1 actions sequenced with no-op)?
%This was technically ruled out in the definition you gave, but then
%such programs end up being depth-2 but not depth-1, which seems
%odd... 
Now suppose that we have defined depth-$k$ actions for $k \geq 1$; a
\defin{depth-($k+1$)}
action is either (1) a depth-$k$ action; (2) an action of the form
\textbf{if $\psi$ then $\alpha$ else $\beta$}, where $\alpha$ and
$\beta$ are depth-($k+1$) actions; or (3) an action of the form
$\alpha;\beta$, where $\alpha$ is a depth-$k_1$ action, $\beta$ is a
depth-$k_2$ action, and $k_1 + k_2 \le k+1$. Note that we have
defined depth in such a way that the depth-$k$ actions
include all the depth-$k'$ actions for $k' < k$,
%adam6: added because reviewer 1 was confused about whether our
%definition is circular... this also responds to a comment of reviewer
%3! 
and so that \textbf{if...then} constructions do not increase depth---only sequencing does.

As in \cite{BH21}, given a basic model $M = (\Omega,
\val{\cdot}_{M})$, we want $\dop(\phi)$ to correspond to a function
from $\Omega$ to $\Omega$ whose range is contained in
$\val{\phi}_{M}$. For this reason we restrict our attention to basic
models in which each $\phi \in F$ is satisfiable, so that
$\val{\phi}_{M} \neq \emptyset$; call such models
\defin{$F$-rich}. Moreover, in order for $\dop(\phi)$ to pick out a
\textit{function}, we need some additional structure that determines,
for each $\omega \in \Omega$, which state in $\val{\phi}_{M}$ the
function corresponding to $\dop(\phi)$ should actually map to. This is
%joe1: c is overloaded, since we also use it for canonical maps.  I
%renamed all selection functions c to \sel, which is now "sel", but
%could be changed
%accomplished with a \emph{selection function} $c: \Omega \times
%adam2: good catch!
accomplished using a \emph{selection function} $\sel: \Omega \times
F \to \Omega$ satisfying $\sel(\omega,\phi) \in \val{\phi}_M$. 

The intuition for selection functions is discussed in greater detail
in \cite{BH21}. Briefly: $\dop(\phi)$ says that $\phi$ should be made
true, but there may be many ways of making $\phi$ true (i.e., many
states one could transition to in which $\phi$ is true); $\sel$ picks out
which of these $\phi$-states to actually move to. In this way we can
think of $\sel$ as serving to ``disambiguate'' the meaning of the
primitive actions, which are inherently underspecified. 

Note that selection functions are formally identical to the mechanism
introduced by Stalnaker \citeyear{Stalnaker68} to interpret
counterfactual conditionals. In our context, we can think of a
selection function as another component of an agent's model of the
world, to be constructed in the representation theorem: in addition to
a probability measure (to represent their beliefs) and a utility
function (to capture their preferences), we will also need a selection
function (to specify how they interpret actions). 

A \defin{selection model (over $F$)} is an $F$-rich basic model $M$
together with a selection function $\sel: \Omega \times F \to
%joe1: I think of "success" as an axiom; I think we should move all
%discussion of axioms until later
%\Omega$ satisfying $\sel(\omega,\phi) \in \val{\phi}_M$ (this latter
%condition is called \defin{success}). Given a selection model $(M,c)$
\Omega$ satisfying $\sel(\omega,\phi) \in \val{\phi}_M$. 
Given a selection model $(M,\sel)$
over $F$, we define the \emph{interpretation} of $\dop(\phi)$
to be the function $\val{\dop(\phi)}_{M,\sel}: \Omega \to \Omega$
given by:  
$$\val{\dop(\phi)}_{M,\sel}(\omega) = \sel(\omega, \phi).$$
This interpretation can then be extended to all sequential actions in $\A_{F}$ in the obvious way: 
$$
\val{\textbf{if $\psi$ then $\alpha$ else $\beta$}}_{M,\sel}(\omega) = \begin{cases}
\val{\alpha}_{M,\sel}(\omega) & \textrm{if $\omega \in \val{\psi}$}\\
\val{\beta}_{M,\sel}(\omega) & \textrm{if $\omega \notin \val{\psi}$,}
\end{cases}
$$
and
$$\val{\alpha;\beta}_{M,\sel} = \val{\beta}_{M,\sel} \circ \val{\alpha}_{M,\sel}.$$

%representation
\section{Representation} \label{sec:rep}

Let $\succeq$ be a binary relation on $\A_{F}$, where we understand
$\alpha \succeq \beta$ as saying that $\alpha$ is ``at least as good
as'' $\beta$ from the agent's subjective perspective. Intuitively,
such a binary relation is meant to be reasonably ``accessible'' to
observers, ``revealed'' by how an agent chooses between binary
options. As usual, we define $\alpha \succ \beta$ as an abbreviation
of $\alpha \succeq \beta$ and $\beta \not\succeq \alpha$, and $\alpha
\sim \beta$ as an abbreviation of $\alpha \succeq \beta$ and $\beta
\succeq \alpha$; 
intuitively, these relations represent ``strict preference'' and
``indifference'', respectively. 

%joe3
%We assume that $\succeq$ is \emph{complete}, that is, for all acts
We assume that $\succeq$ is a \emph{preference order}, so is \emph{complete}
(i.e., for all acts
$\alpha, \beta \in \A_{F}$, either $\alpha \succeq \beta$ or $\beta
%joe3
%\succeq \alpha$. While there are good philosophical reasons to
\succeq \alpha$) and transitive.
%adam6: added in response to a comment from reviewer 6
Note that completeness immediately gives reflexivity as well.
While there are good philosophical reasons to
%joe3: I'm not sure what the best reference is; I added this
consider incomplete relations (see \cite{DMO04} and the references therein),
%adam3: is it worth it to add some references here? it's tangential to
%our main point 
for the purposes of this paper we adopt the assumption of completeness in order to simplify the (already quite involved) representation result.

A \defin{language-based SEU (Subjective Expected Utility)
representation} for a relation $\succeq$ on $\A_{F}$ is a
selection model $(M,\sel)$ together with a probability measure
$\Pr$ on 
$\Omega$ and a utility function $u: \Omega \to \mathbb{R}$ such
that, for all $\alpha, \beta \in \A_{F}$,  
\begin{equation} \label{eqn:rep}
\alpha \succeq \beta \dimp \sum_{\omega \in \Omega} \Pr(\omega) \cdot u(\val{\alpha}_{M,\sel}(\omega)) \geq \sum_{\omega \in \Omega} \Pr(\omega) \cdot u(\val{\beta}_{M,\sel}(\omega)).
\end{equation}
Our goal is to show that such a representation exists if the
preference order satisfies one key axiom, discussed below.

%adam2: added a few subsections; what do you think about this organization?
\subsection{Canonical maps and canonical actions}

For each $a \subseteq \Phi$, let
$$\phi_{a} = \bigwedge_{p \in a} p \land \bigwedge_{q \notin a} \lnot q,$$
so $\phi_{a}$ settles the truth values of each primitive propositions
in the language $\L$: it says that $p$ is true iff it belongs to
%adam6: fixed typo pointed out by reviewer 3
%$X$. An \defin{atom} is a formula of the form $\phi_{a}$. Since we
$a$. An \defin{atom} is a formula of the form $\phi_{a}$.%
%adam6: added in response to a comment of reviewer 3
\footnote{Not to be confused with \emph{atomic propositions}, which is another common name for primitive propositions.}
Since we
are working with a classical propositional logic, it follows that for
all formulas $\phi \in \L$ and atoms $\phi_{a}$, the truth of $\phi$
is determined by $\phi_{a}$: either $\models \phi_{a} \lthen \phi$, or
$\models \phi_{a} \lthen \lnot \phi$. In the framework of \cite{BH21},
it followed that 
every action could be identified with a function from atoms to
elements of $F$, since atoms determine whether the tests in an action
hold. In our context, however, things are not so simple: actions
can be put in sequence, so even though an atom may tell us which
tests at the ``first layer'' hold, so to speak, it may not
be enough to tell us which later tests hold. For example, in an
action like ``if $p$ then $\dop(r)$ else $\dop(r')$; if $q$ then
  $\dop(\neg r)$'', the atom that currently holds determines whether
  $p$ holds, but tells us nothing about whether $q$ will hold when we
  get around to doing the second action in the sequence.

To deal with this, we need an outcome space that is richer than just $F$
  (i.e., richer than the set of all primitive actions); roughly
speaking, we will instead identify actions with functions from atoms
to ``canonical'' ways of describing the sequential structure of
actions. We now make this precise. 

Suppose that $|2^{\Phi}| = N$, so there are $N$ atoms; call them
$a_{1}, \ldots, a_{N}$. For each subset $A$ of atoms, let
$\phi_A = \bigvee_{a \in A} \phi_{a}$. A
basic fact of propositional logic is that for every formula $\phi$,
there is a unique set $A$ of atoms such that $\phi$ is logically
equivalent to $\phi_A$. Let $\tilde{F} = \{\phi_{A} \: : \: (\exists
\phi \in F)(\models \phi_{A} \liff \phi)\}$. 

%adam1: I changed this to try to make it notationally cleaner. For one
%thing, we don't have "else if" defined in our action language. But
%the core idea here is that at every "layer", we want to know what the
%action will "do" given any atom, right? My feeling is that trying to
%squeeze this all into the syntax of actions makes things harder. 
%joe1: I rewrote this somewhat: i found it strange that you stuck the
%definition of 0-maps in the middle of 1-maps
%joe2*: Looking at this again, I found it overly complicated. What
%we really want is the canonical action corresopnding to an action
%alpha. the canonical k-map is a tool used to get this. I rewrote
%this. Ignore the joe2's in the commented out section; I made some
%changes before undertaking this more significant rewrite.
%adam3: you're right; this new version is a big improvement!

%begin commentout
\commentout{
We define canonical $k$-maps and their corresponding action by
induction on $k$.
%joe2
%The domain 
%of a canonical $k$-map is $\tilde{F}$, the set of atoms; its range is
%some set of actions.
A canonical $k$-map is a function from atoms to depth-$k$ actions of a
particular form.
The unique \defin{canonical $0$-map} is the constant
function $c$ such that $c(a) = \noop$ for all atoms $a$;
the canonical action $\gamma_{c} \in \A_{F}$ corresponding to $c$ is
$\gamma_{c} = \noop$.
%joe2
%A \defin{canonical $1$-map} is a function $c$ defined on
%all atoms such that for each atom $a_{i}$, $c(a_{i})$ is either
A \defin{canonical $1$-map} $c$ maps atoms to either
%joe1
%$\noop$ or $\dop(\phi_A)$ for some $\phi_{A} \in \tilde{F}$; if $c$ is
%identically equal to $\noop$, then this is also the (unique) canonical
%$0$-map. Given any canonical $1$-map $c$, we also define the
%joe2*: I cut the requirement that c(a_j) \ne noop for consistency;
%see below.
%$\noop$ or $\dop(\phi_A)$ for some $\phi_{A} \in \tilde{F}$, and
%$c(a_{j}) \neq \noop$ for some atom $a_{j}$.  
$\noop$ or $\dop(\phi_A)$ for some $\phi_{A} \in \tilde{F}$.
The 
\defin{canonical action $\gamma_{c} \in \A_{F}$ corresponding to $c$} 
%joe1*: This doesn't seem right; why are we interested in the
%lowest-index atom?  It's also inconsistent with the way you defined
%canonical action for canonical k-maps below
%adam2*: so the problem here is that if $c(a_{N-1})$ and $c(A_{N})$ are
%both noop, then this expression includes if "\phi_{a_{N-1}} then noop
%else noop", which is technically not an action. my version avoids
%that, though I agree it is clunky
%joe2*: I see: dealt with it below.
%as follows: if $c$ is the canonical $0$-map, then $\gamma_{c} =
%\noop$; otherwise, let $a_{i}$ be the lowest index atom such that
%$c(a_{i}) \neq \noop$, and define 
%$$\gamma_{c} = \textbf{if $\phi_{a_{i}}$ then $c(a_{i})$ else
%$\gamma_{c_{i}}$},$$ 
%where
%$$c_{i}(a) = \begin{cases}
%c(a) & \textrm{if $a \neq a_{i}$}\\
%\noop & \textrm{if $a = a_{i}$.}
%\end{cases}$$
%joe2
%is
is the depth-1 action
$$\gamma_{c} = \textbf{if $\phi_{a_{1}}$ then $c(a_{1})$ else (if
  $\phi_{a_{2}}$ then $c(a_{2})$ else ($\cdots$(if $\phi_{a_{N-1}}$
%joe2
  %  then $c(a_{N-1})$ else $c(a_{N})$)$\cdots$))}.$$
    then $c(a_{N-1})$ else $c(a_{N})$)$\cdots$))}$$ 
if at least one of $c(a_{N-1})$ or $c(a_N)$ is not $\noop$.
If both are $\noop$, then  {\textbf if $\phi_{a_{N-1}}$
    then $c(a_{N-1})$ else $c(a_{N})$} is not an action
according to our definitions; in this case we take 
$$\gamma_{c} = \textbf{if $\phi_{a_{1}}$ then $c(a_{1})$ else (if
  $\phi_{a_{2}}$ then $c(a_{2})$ else ($\cdots$(if $\phi_{a_{m}}$
  then $c(a_{m})$ else $\noop$))},$$
where $m$ is the least index such that $c(a_{m}) \ne \noop$.  (If
$c_{a_m) = \noop$ for all $m$, then $\gamma_c = \noop$.)

Inductively, a \defin{canonical $(k+1)$-map} is
%joe2: we don't need this
% either a canonical $k$-map or
  a function $c$ defined on all atoms such that for each atom
$a_{i}$, $c(a_{i})$ is either $\noop$, $\dop(\phi_A)$ for some
$\phi_{A} \in \tilde{F}$, or $\dop(\phi_{A});\gamma_{c'}$ where
%joe1
%$\phi_{A} \in \tilde{F}$ and $c'$ is a canonical $k$-map. In this case
%the \defin{canonical action $\gamma_{c} \in \A_{F}$ corresponding to
%corresponding to $c$} is defined as follows: if $c$ is a canonical
%$k-map this has 
%already been defined at a previous inductive step; otherwise: 
%$$\gamma_{c} = \textbf{if $\phi_{a_{1}}$ then $c(a_{1})$ else (if
%  $\phi_{a_{2}}$ then $c(a_{2})$ else ($\cdots$(if $\phi_{a_{N}}$ then
%  $c(a_{N})$)$\cdots$))}.$$ 
$\phi_{A} \in \tilde{F}$ and $\gamma_{c'}$ is a canonical
%adam2
%$k$-action.
action corresponding to some canonical $k$-map $c'$.
The canonical action $\gamma_{c} \in \A_{F}$
corresponding to $c$ is defined just as above (for canonical 1-maps).

%joe2
%Intuitively, we want these ``canonical'' maps to encode the structure
Intuitively, every depth-$k$ action is equivalent to a canonical
depth-$k$ action. 
if $c$ isa canonical $k$-map, then 
every depth-$k$ action $\alpha$ is equivalent to $
%joe1
%of arbirtary actions in an atom-dependent, up-to-logical-equivalence
%adam2: what was the change here?
%joe2: hmmm ... I can't remember what I was trying to do!
of arbitrary actions in an atom-dependent, up-to-logical-equivalence
way. For example, if $c$ is a canonical $1$-map, it should tell us how
a depth-$1$ action $\alpha$ works by specifying what it does (up to
logical equivalence) on any atom: if $c(a) = \dop(\phi_{A})$, that
should mean that if $\alpha$ is performed in a state where $a$ holds,
then what will happen is $\dop(\phi_{A})$ (up to logical
equivalence). If $c$ is instead a canonical $2$-map, it should tell us
how a depth-$2$ action $\alpha$ works: if $c(a) =
\dop(\phi_{A});\gamma_{c'}$, that should mean that if $\alpha$ is
performed in a state where $a$ holds, then what will happen first is
$\dop(\phi_{A})$ followed by $\gamma_{c'}$ (again, up to logical
equivalence). 
}}
%end commentout

We want to associate with each action $\alpha$ of depth $k$ a
\defin{canonical action} $\gamma_\alpha$ of depth $k$ that is,
intuitively, equivalent to $\alpha$.  The canonical action
$\gamma_\alpha$ makes explicit how $\alpha$ acts in a state
characterized by an atom $a$. We define $\gamma_\alpha$ by induction
on the structure of $\alpha$. It is useful in the construction to
simultaneously define the \defin{canonical map}
$c_\alpha$ associated with $\alpha$, a function  from atoms to actions
such that, for all atoms $a$, $c_\alpha(a)$ has the form $\noop$,
$\dop(\phi_A)$, or 
$\dop(\phi_A); \gamma_\beta$ for some set $A$ of atoms and action $\beta$.
Intuitively, $c_\alpha$ 
defines how $\alpha$ acts in a state characterized by an atom $a$.  For
example, if $\alpha$ is
%adam3: our syntax demands an "else"
%\textbf{if $a$ then $\dop(\phi_A)$}, then
\textbf{if $a$ then $\dop(\phi_A)$ else $\beta$}, then
$c_\alpha(a) = \dop(\phi_A)$.

If $\alpha = \noop$, then $\gamma_\noop = \noop$ and $c_\noop$
is the constant function such that $c_\noop(a) = \noop$ for all atoms
$a$.
If $\alpha$ is a depth-1 action other than $\noop$, then we define
$c_\alpha$ by induction on structure:
\begin{eqnarray*}
c_{\dop(\phi)}(a) & = & \dop(\phi_{A}), \textrm{ where $A$ is the unique subset of atoms such that $\models \phi_{A} \liff \phi$}\\
c_{\textbf{if $\psi$ then $\alpha$ else $\beta$}}(a) & = & \begin{cases}
c_{\alpha}(a) & \textrm{if $\models \phi_{a} \lthen \psi$}\\
c_{\beta}(a) & \textrm{if $\models \phi_{a} \lthen \lnot \psi$}.
\end{cases}\\
%joe2*: Note that I'm not taking these to be depth-1 actions any more
%c_{\alpha;\noop} & = & c_{\alpha}\\
%c_{\noop;\alpha} & = & c_{\alpha}.
\end{eqnarray*}
The action $\gamma_\alpha$ is the depth-1 action defined as follows:
$$\gamma_{\alpha} = \textbf{if $\phi_{a_{1}}$ then $c_\alpha(a_{1})$ else (if
  $\phi_{a_{2}}$ then $c_\alpha(a_{2})$ else ($\cdots$(if $\phi_{a_{N-1}}$
%joe2
  %  then $c(a_{N-1})$ else $c(a_{N})$)$\cdots$))}.$$
  then $c_\alpha(a_{N-1})$ else $c_\alpha(a_{N})$)$\cdots$))}$$ 
if at least one of $c_\alpha(a_{N-1})$ or $c_\alpha(a_N)$ is not $\noop$.
If both are $\noop$, then  \textbf{ if $\phi_{a_{N-1}}$
    then $c_\alpha(a_{N-1})$ else $c_\alpha(a_{N})$} is not an action
according to our definitions; in this case, we take 
$$\gamma_{\alpha} = \textbf{if $\phi_{a_{1}}$ then $c_\alpha(a_{1})$ else (if
  $\phi_{a_{2}}$ then $c_\alpha(a_{2})$ else ($\cdots$(if $\phi_{a_{m}}$
  then $c_\alpha(a_{m})$ else $\noop$))},$$
where $m$ is the least index such that $c_\alpha(a_{m}) \ne \noop$.  (If
$c_\alpha(a_m) = \noop$ for all $m$, then $\gamma_\alpha = \noop$.)

If $\alpha$ is a depth-$(k+1)$ action other than $\noop$, then we
  again define $c_\alpha$ by induction on structure:
  \begin{eqnarray*}
c_{\textbf{if $\psi$ then $\alpha$ else $\beta$}}(a) & = & \begin{cases}
c_{\alpha}(a) & \textrm{if $\models \phi_{a} \lthen \psi$}\\
c_{\beta}(a) & \textrm{if $\models \phi_{a} \lthen \lnot \psi$}
\end{cases}\\
c_{\alpha;\beta}(a) & = & \begin{cases}
c_{\beta}(a) & \textrm{if $c_{\alpha}(a) = \noop$}\\
\dop(\phi_{A});\gamma_{\beta} & \textrm{if $c_{\alpha}(a) = \dop(\phi_{A})$}\\
%joe2*: Note that I have dealt with your concern about not having a
%canonical program in the last clause
\dop(\phi_{A});\gamma_{\beta';\beta} & \textrm{if
  $c_{\alpha}(a) = \dop(\phi_{A});\gamma_{\beta'}$.} 
\end{cases}
\end{eqnarray*}
The canonical action $\gamma_\alpha$ is defined as above for the
depth-1 case.

We take $CA^k$ to be the set of canonical actions of depth $k$, and 
$CM^k$ to be the set of canonical maps that correspond to
some depth-$k$ action.
%joe3: we should use better notation, but I'll do this for now just to
%make everything consistent.
%adam4: sounds good
Finally, let $CA^{k,-}$ consist of all depth $k$-actions of the form
$\noop$, $\dop(\phi_A)$, or $\dop(\phi_A);\gamma_\beta$,
where $\beta$ is a depth ($k-1$)-action.
Note that if $\alpha$ is a depth-$k$ action and $a$ is an atom, then
$c_\alpha(a) \in CA^{k,-}$.
Observe that since the set of atoms is finite, as is $\tilde{F}$, it
%joe2
%follows that for all $k$, there are only finitely many canonical %$k$-maps.
%joe3
%follows that for all $k$, $CM^k$ and $CA^k$ are also finite.
follows that for all $k$, $CM^k$, $CA^k$, and $CA^{k,-}$ are also finite.
This will be crucial in our representation proof.

\subsection{Cancellation}

%joe5: ordered alphabetically, so it's [2,3] instead of [3,2]
%As in \cite{BEH06,BH21}, the key axiom in our representation theorem
As in \cite{BH21,BEH06}, the key axiom in our representation theorem
is what is known as a \emph{cancellation axiom},
%adam7
%This is the case for us as well,
although the details differ due to the nature of our actions. 
%joe5: perhaps add
Simple versions of the cancellation axiom go back to
\cite{KLST71,scott64};
%adam7
%the
our
version, like those used in
\cite{BH21,BEH06}, has more structure. See \cite{BEH06} for further
discussion of the axiom.

The axiom uses multisets.
Recall that a \emph{multiset},
intuitively, is a set that allows for multiple instances of each of
its elements. Thus two multisets are equal just in case they contain
the same elements \textit{with the same multiplicities}. We use
``double curly brackets'' to denote multisets, so for instance
$\mult{a,b,b}$ is a multiset, and it is distinct from $\mult{a,a,b}$:
both have three elements, but the mulitiplicity of $a$ and $b$ differ.
With that background, we can state the axiom:  
\begin{description}
\item[(Canc)]
Let $\alpha_{1}, \ldots, \alpha_{n}, \beta_{1}, \ldots, \beta_{n} \in \A_{F}$, and suppose that for each $a \subseteq \Phi$ we have
$$\mult{c_{\alpha_{1}}(a), \ldots, c_{\alpha_{n}}(a)} = \mult{c_{\beta_{1}}(a), \ldots, c_{\beta_{n}}(a)}.$$
Then, if for all $i < n$ we have $\alpha_{i} \succeq \beta_{i}$, it follows that $\beta_{n} \succeq \alpha_{n}$.
\end{description}
Intuitively, this says that if we get the same outcomes (counting multiplicity) using the canonical maps for $\alpha_1, \ldots, \alpha_n$ as for $\beta_1, \ldots, \beta_n$ in each state, then we should view the collections $\mult{\alpha_1, \ldots, \alpha_n}$ and $\mult{\beta_1, \ldots, \beta_n}$ as being ``equally good'', so if $\alpha_i$ is at least as good as $\beta_i$ for $i=1, \ldots, n-1$, then, to balance things out, $\beta_n$ should be at least as good as $\alpha_n$. How intuitive this is depends on how intuitive one finds the association $\alpha \mapsto c_{\alpha}$ defined above; if the map $c_{\alpha}$ really does capture ``everything decision-theoretically relevant'' about the action $\alpha$, then cancellation does seem reasonable.

%Cancellation is a surprisingly powerful axiom.
%In particular, BEH show that  we can use {\bf (Canc)} to derive many simpler (and more classical) principles of choice: that $\succeq$ is reflexive and transitive, that \emph{independence} holds,\footnote{That is, for all $\alpha, \beta, \gamma, \gamma' \in \A_{F}$ and all $\phi \in F$, 
%$$(\textbf{if $\phi$ then $\alpha$ else $\gamma$} \succeq \textbf{if
%    $\phi$ then $\beta$ else $\gamma$}) \; \dimp \; (\textbf{if $\phi$
%    then $\alpha$ else $\gamma'$} \succeq \textbf{if $\phi$ then
%    $\beta$ else $\gamma'$}).$$} and that if $\alpha$ and $\beta$ are
%\emph{equivalent} in the sense that f_{\alpha} = f_{\beta}$, then
%$\alpha \sim \beta$.
%(However, it should be noted that Cancellation seems stronger than the conjunction of these axioms.)
In particular, it is not hard to show that whenever $\alpha$ and
%joe2
%$\beta$ are such that $c_{\alpha} = c_{\beta}$, cancellation implies
$\beta$ are such that $\gamma_{\alpha} = \gamma_{\beta}$
%adam3: i think this is helpful here
(which of course is equivalent to $c_{\alpha} = c_{\beta}$),
cancellation implies
that $\alpha \sim \beta$. In other words, any information about
$\alpha$ and $\beta$ that is lost in the transformation to canonical
actions is also forced to be irrelevant to decisionmaking. This means
that {\bf (Canc)} entails, among other things, that agents do not
distinguish between logically equivalent formulas (since, e.g., when
$\models \phi \liff \phi'$, it's easy to see that
%joe2
%$c_{\dop(\phi)} = c_{\dop(\phi')}$).
$\gamma_{\dop(\phi)} = \gamma_{\dop(\phi')}$). 

%adam6*: reviewer 1 asks for a more extensive discussion of
%cancellation, which indeed might be warranted since it is our only
%axiom! reviewer 3 asks for a "concrete, real-world example" to help
%understand it. 
%joe5: What could we add besides the sentence above taht starts
%"Intuitively"?    I did add a few sentences further back. I'm happy
%to add more if you think it would be useful.

\subsection{Construction}

\begin{theorem}
If $\succeq$ is a
%adam4
%complete binary relation
preference order
on $\A_{F}$ satisfying {\bf (Canc)},
%(resp., {\bf (Canc)} and {\bf (Cent)}; {\bf (Canc)}, {\bf (Cent)}, {\bf (SSC)}, and {\bf (Trans)}) + whatever axioms we need for sequential programs,
then there is a language-based
%adam6: reviewer 3 requests this spelled out
%SEU representation of $\succeq$.
subjective expected utility representation of $\succeq$.
\end{theorem}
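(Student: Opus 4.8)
The plan is to prove the theorem in three movements: first reduce the problem from arbitrary actions to their canonical forms, so that only the maps $c_\alpha$ matter; then construct a single selection model whose state space is rich enough to interpret sequential actions of \emph{every} depth at once; and finally extract the probability measure $\Pr$ and utility $u$ by a cancellation-driven separation argument that is run at each finite depth and then pushed to a limit. For the reduction I would first prove a \emph{canonicalization lemma}: in every selection model $(M,\sel)$ and for every $\alpha \in \A_F$, $\val{\alpha}_{M,\sel} = \val{\gamma_\alpha}_{M,\sel}$. This goes by induction on depth and structure, using that each state $\omega$ satisfies exactly one atom $\phi_{a(\omega)}$, so the atom-based tests in $\gamma_\alpha$ agree with the original tests of $\alpha$; the clause defining $c_{\alpha;\beta}$ is exactly what makes the sequencing step of the induction close. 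Combined with the consequence of \textbf{(Canc)} already noted in the text (that $c_\alpha = c_\beta$ forces $\alpha \sim \beta$), this shows both sides of (\ref{eqn:rep}) depend on $\alpha$ only through $c_\alpha$, so it suffices to represent the induced order on canonical maps.

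For the model I would build $\Omega$ coinductively, so that a single state encodes an entire branching future of selection choices. Concretely, take $\Omega$ to be the set of functions $s$ from finite sequences over the finite index set $\tilde F$ to atoms (viewed as subsets of $\Phi$) satisfying the coherence condition that $s(w\,\phi_A)$ is one of the atoms in $A$; define the valuation by $\omega \models p$ iff $p$ belongs to the atom $s(\epsilon)$, and the selection function by the shift $\sel(s,\phi_A) = s'$ where $s'(w) = s(\phi_A\, w)$. One checks that $s' \in \val{\phi_A}$, that $s'$ again satisfies coherence, and that composition of interpretations matches concatenation of sequences, so that $\val{\dop(\phi_A);\gamma_\beta}$ behaves as intended. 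This $\Omega$ is the natural home for unbounded sequencing: no finite state space suffices, since an action may probe the future arbitrarily deeply.

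For the representation numbers I would fix a depth $k$ and work with the finitely many canonical maps in $CM^{k}$. To each action $\alpha$ of depth $\le k$ associate the integer vector $v_\alpha$ indexed by pairs $(a,x)$ with $a$ an atom and $x \in CA^{k,-}$, where $v_\alpha(a,x)=1$ if $c_\alpha(a)=x$ and $0$ otherwise. The multiset equality in \textbf{(Canc)} is precisely the identity $\sum_i v_{\alpha_i} = \sum_i v_{\beta_i}$, so \textbf{(Canc)} is exactly the cancellation hypothesis of a Scott-style theorem of the alternative (equivalently, a separating-hyperplane statement about the convex cone generated by the differences $v_\alpha - v_\beta$ for $\alpha \succ \beta$). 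Applying it yields a nonnegative functional $\mu$ on the $(a,x)$-coordinates such that $\alpha \mapsto \sum_a \mu(a,c_\alpha(a))$ represents $\succeq$ on depth-$\le k$ actions. To put this in SEU form I would take $\Pr$ concentrated on one representative starting state per atom, with mass the marginal of $\mu$, and then exploit the richness of $\Omega$ to choose those representatives so that distinct values $x$ lead to distinct outcome states; this lets the utility assignment $u(\val{\alpha}(s_a)) = \mu(a,c_\alpha(a))/\Pr(a)$ be well defined and gives $\sum_{\omega} \Pr(\omega)\,u(\val{\alpha}_{M,\sel}(\omega)) = \sum_a \mu(a,c_\alpha(a))$.

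The hard part, and the main obstacle, will be reconciling these depth-by-depth representations into a single pair $(\Pr,u)$ valid for all of $\A_F$ simultaneously, since the depth of sequencing is unbounded. The functionals obtained at successive depths must be made mutually consistent --- in particular, the utility assigned to an outcome reached after $k$ steps must agree with the value computed by \emph{continuing} the representation on the depth-$(k-1)$ tail, because $c_{\alpha;\beta}$ feeds $\gamma_\beta$ into the outcome recursively --- and then combined in a limit. I would handle this by a compactness argument: normalize the finite-depth representations into a common compact parameter space and extract a convergent subnet (via Tychonoff, or a diagonalization over the countably many canonical actions), using uniqueness-up-to-scaling at each level to ensure the limit still represents $\succeq$ on every individual action. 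Additional care is then needed to guarantee that the limiting $\Pr$ is a genuine probability measure on $\Omega$ and that the sums in (\ref{eqn:rep}) converge with $u$ real-valued; obtaining uniform bounds on the utilities across all depths is the delicate estimate. Controlling this interaction between the unbounded recursion of sequential actions and the limiting construction of $\Pr$ and $u$ is where essentially all the difficulty of the theorem resides.
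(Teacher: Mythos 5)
Your overall architecture matches the paper's: reduce to canonical maps using the consequence of \textbf{(Canc)} that $c_\alpha = c_\beta$ forces $\alpha \sim \beta$; obtain, for each fixed depth $k$, a state-dependent value $v^k(a,x)$ on atom--outcome pairs by a Scott-style cancellation/separation theorem (the paper invokes Theorem 2 of \cite{BEH06} for exactly this step, with the atoms as states and the finite set $CA^{k,-}$ as outcomes); build a tree-structured state space in which selection acts like a shift; and reconcile the depths by normalization plus a diagonal convergent-subsequence argument (this last piece is essentially the paper's treatment of the sequence $v^1, v^2, \ldots$).

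The genuine gap is in the step you dispatch with ``exploit the richness of $\Omega$ to choose those representatives so that distinct values $x$ lead to distinct outcome states.'' Converting the state-dependent functional $\mu(a,x)$ into a state-\emph{independent} utility $u$ on $\Omega$ is the technical heart of the theorem, and your plan for it fails in general: with $\Pr$ concentrated on one tree $s_a$ per atom, well-definedness of $u(\val{\alpha}(s_a)) = \mu(a,c_\alpha(a))/\Pr(s_a)$ requires the map from pairs $(a,x)$ to final states to be injective on all relevant pairs, and the needed collisions can be unavoidable. In your coinductive shift space, for instance, if every formula in $\tilde{F}$ is equivalent to a single atom, the coherence condition makes the trees rigid, so the state reached by $\dop(\phi_{\{b\}})$ coincides with the state reached by $\dop(\phi_{A'});\dop(\phi_{\{b\}})$ from the same start --- even though \textbf{(Canc)} permits a strict preference between these two actions and the depth-$k$ functional may assign them different values. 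The paper's construction is engineered precisely to avoid this: states are pairs consisting of an ordered labeled tree \emph{and a progress function}, so the state an action leads to records the entire counterfactual trajectory rather than just the node currently reached; $\Pr^k$ is spread uniformly over \emph{all} trees rather than one per atom; and the existence of $u^k$ becomes the solvability of a linear system $MX = U$, established by proving that the rows of the $0$--$1$ incidence matrix indexed by $CA^{k,-}$ are linearly independent, via an induction over a bespoke partial order $\sqsupset$ on $CA^{k,-}$ with explicitly constructed separating trees $s^{k,\alpha}$. That linear-independence argument, together with the cross-depth $u$-compatibility condition (handled in the paper via $k$-bounded progress functions and a Carath\'eodory extension for $\Pr^\infty$), is exactly what your proposal leaves unresolved; note also that your appeal to ``uniqueness-up-to-scaling at each level'' in the limit does not hold for state-dependent representations and is not what the paper relies on.
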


\begin{proof}
As in \cite{BH21}, we begin by following the proof in \cite[Theorem
  2]{BEH06}, 
%joe2: added
which says that if a preference order on a set of acts mapping a finite
state space to a finite outcome space satsifies the
cancellation axiom, then it has a state-dependent representation.
%to obtain a \emph{state-dependent} representation that we
%will subsequently transform into the language-based representation
%that is our main goal.
``State-dependent'' here means that the utility
function constructed depends jointly on both states and outcomes, in a
sense made precise below.
%joe2
%Let $C^{k} = \{c_{\alpha} \: : \: \textrm{$\alpha \in \A_{F}$ is
%  depth-$k$}\}$ denote the set of all canonical $k$-maps; observe that
%this set consists of functions from the (finite) set of atoms to the
%(also finite) set of all depth-$k$ actions. If we think of these two
%sets as state and outcome spaces, respectively, we can view $C^{k}$ as
%a set of Savage acts in the classical sense \cite{Savage}. 
To apply this theorem in our setting, we first fix $k$ and take $CM^k$
to be the set of acts. With this viewpoint, the state space is the
%joe3*
set of atoms and the outcome space is $CA^{k,-}$; as we observed, both are
finite. 
%adam3*: I don't think this is quite right. Canonical acts (except
%noop) always have the form of a big nested if...then statement,
%ranging over all atoms. By contrast, canonical maps output actions
%like do(phi_A);\gamma_\beta (which are not of this form). So the
%actual outcome space is more like: 
%\{\noop\} \cup \{\dop(\phi_{A}) : \phi_A \in \tilde{F}\} \cup
%\{\dop(\phi_{A});\gamma_\beta : \phi_A \in \tilde{F}, \beta a
%depth-k-1 action\}

%Given a preference order $\succeq$ on programs, we can restrict it to
%a preference order $\succeq^k$ on depth-$k$ programs that satisfies
%the same axioms. It follows from Proposition 4 and Theorem 6 in BEH
%that there is a state-dependent representation of $\succeq^k$, where
%the states are the atoms and the outcomes are programs of depth $k$.
%That is, there exists a 
The relation $\succeq$ on $\A_{F}$ induces a relation $\succeq^{k}$ on
$CM^{k}$ defined in the natural way:  
$$c_{\alpha} \succeq^{k} c_{\beta} \dimp \alpha \succeq \beta.$$
%joe2
%joe3
%Clearly, if $\succeq$ sastisfies cancellation, then so does $\succeq^k$.
%adam4: removed this from here and added it below; well-definedness
%should be prior to anythings else 
%Clearly, if $\succeq$ satisfies cancellation, then so does $\succeq^k$.
As noted, {\bf (Canc)} implies that $\alpha \sim \alpha'$ whenever
$c_{\alpha} = c_{\alpha'}$, from which it follows that $\succeq^{k}$
is well-defined.
%joe2*: I assume that this is what you wanted transitivity for; if
%not, I'm confused.  (Note that I think you meant that \succeq^k (not
%\succeq) has to be transitive.)
%adam3*: I don't think that's what I mean. Here's my thinking: to show
%that $\succeq^{k}$ is well-defined, we ultimately need to show that
%whenever $\alpha \succeq \beta$ and $\alpha \sim \alpha'$, we also
%have $\alpha' \succeq \beta$. But this requires transitivity of
%\succeq, no?
%joe3: \succeq is, by assumption, a preference order, so it must be 
%transitive.  (I clarified this earlier.) For \succeq^k, it seems to
%me that transitivity follows
%adam4: I agree that explicitly assuming that \succeq is transitive is
%sufficient here. I believe transitivity follows from (Canc), so
%assuming it directly is a bit redundant, but for now I think that's
%fine.
%joe4: transitivity does follow from Canc.
%immediately from the transitivity of \succeq, once you should that
%\succeq^k is well-defined.
%\footnote{This also requires us to know that $\succeq$
%is transitive, which also follows from {\bf (Canc)}!} Even better, as
%shown in \cite{BEH06}, {\bf (Canc)} is sufficient to give us the
%aforementioned state-dependent representation result for
%$\succeq^{k}$, namely, that there exists a real-valued utility
To apply Theorem 2 in \cite{BEH06}, it must also be the case that
$\succeq^k$ is a 
%joe3: we don't need cancellation again.
%preference order and, in particular, reflexive and transitive.  This
%also follows from {\bf (Canc)} (see \cite[Proposition 1]{BEH06}).
%adam4
%preference order, but this is
preference order and satisfies cancellation, which is
immediate from the definition of $\succeq^k$ and the fact that
$\succeq$ is a preference order and satisfies cancellation.
It therefore follows that $\succeq^k$ has a state-dependent
representation; that is, there exists a real-valued utility
function $v^k$ defined on state-outcome pairs such that, for all
depth-$k$ actions $\alpha$ and $\beta$,
\begin{equation}\label{eq1}
  \mbox{$c_\alpha \succeq^k c_\beta$ iff $\sum_{i=1}^N v^k(a_i,c_{\alpha}(a_i))
  \ge \sum_{i=1}^N v^k(a_i,c_{\beta}(a_i))$}.a
\end{equation}
It follows from our definitions that for all
depth-$k$ actions $\alpha$ and $\beta$,
$$\mbox{$\alpha \succeq \beta$ iff $\sum_{i=1}^N v^k(a_i,c_{\alpha}(a_i))
  \ge \sum_{i=1}^N v^k(a_i,c_{\beta}(a_i))$}.$$
%joe2
%Note that we needed to restrict to programs of depth $k$ in order to
%the state-dependent representation theorem from \cite{BEH06}. 
As we observed, we needed to restrict to depth-$k$ actions here in order to
ensure that the outcome space is finite, which is necessary to apply
Theorem 2 in \cite{BEH06}.
%joe2
%Eventually we will combine all these results into one
%that is not depth-limited. First, however, we face the challenge of
%massaging the result above into a language-based representation with a
%utility function defined on some suitably chosen state space (which
%will be the same as the outcome space, for us), along with a
%probability measure and a selection function. 
%
%More precisely, our next goal is to define a selection model $M = (\Omega^k,

Our next goal is to define a selection model $M = (\Omega^k,
\val{\cdot}_{M}, \sel)$, a probability $\Pr^k$ on $\Omega^k$, and a utility function $u^k$ 
on $\Omega^k$ such that, for all actions $\alpha$ and $\beta$ of depth $k$,
\begin{equation}\label{eq1.5}
\alpha \succeq \beta \mbox{ iff }
\sum_{\omega \in \Omega^k} {\Pr}^k(\omega)u^k(\val{\alpha}_{M,\sel}(\omega)) \ge
\sum_{\omega \in \Omega^k}{\Pr}^k(\omega)u^k(\val{\beta}_{M,\sec}(\omega)).
\end{equation}
%adam6: added comma per reviewer 3
%Eventually, we will construct a single (state and outcome) space $\Omega^*$
Eventually, we will construct a single (state and outcome) space $\Omega^*$,
a probability $\Pr^*$ on $\Omega^*$, and a utility $u^*$ on $\Omega^*$
that we will use to provide a single representation theorem for all
actions, without the restriction to depth $k$, but we seem to need to
construct the separate spaces first.

As a first step to defining $\Omega^k$, define
a \emph{labeled $k$-tree} to be a balanced tree of depth $k$ whose
root is labeled by an atom such that each non-leaf node
has exactly $N$ children, labeled $a_1, \ldots,
a_N$, respectively.
An \emph{ordered labeled $k$-tree ($k$-olt)} is a labeled
$k$-tree where, associated with each non-leaf node, there is a total 
order on its children.
%adam2: but we're not assuming this for now, right?
%If we assume centering, the best child of a node $t$ in the total order is the one whose atom is the same as that of $t$.
We assume that in different labeled $k$-trees, the
nodes come from the same set, and corresponding 
nodes have the same label, so there is a unique labeled
$k$-tree and $k$-olts differ only in the total order associated with
each non-leaf node and the label of the root.
Let $T^k$ consist of all $k$-olts.
%joe2: did you mean to use \defin here rather than \emph (which is
%what I used)?
%A $(k')$-olt $s^{k'}$ \emph{extends} (or
%adam3: I can't remember... I'm fine with \emph here if you prefer it
For $k' \ge k$, a $(k')$-olt $s^{k'}$ \emph{extends} (or
%joe3: you have other occurrences of \defin.  I changed it for
%consistency, unless there's some more general rule I'm missing.
%\emph{is an extension of}) a $k$-olt $s^k$ if $s^k$ is the prefix
%of $s^{k'}$ of depth $k$; we call $s^k$ the \emph{projection} of
%adam4: roughly speaking, I use \defin for major definitions and \emph for minor definitions and terminology. but I admit the line is sometimes blurry. I'm fine using \emph throughout for now.
\emph{is an extension of}) a $k$-olt $s^k$ if $s^k$ is the prefix
of $s^{k'}$ of depth $k$; we call $s^k$ the \emph{projection} of
$s^{k'}$ onto depth $k$.

The intuition behind a $k$-olt is the following: the atom
associated with the root $r$ describes what is true before an
action is taken.
%adam4
%The total order on the children of $r$ associated with a node $r$
%describes the selection function at $r$.
For each non-leaf node $t$, the total order associated with $t$
on the children of $t$ describes the selection function at $t$
(with children lower in the order considered ``closer''). 
For example, suppose
that there are two primitive propositions, $p$ and $q$.  Then there
are four atoms.  If we take the action $\dop(\phi)$ starting at $r$, we want to
%adam4
%move to the closest state to $r$ satisfying $\phi$.
``move'' to the ``closest'' child of $r$ satisfying $\phi$,
which is the child lowest in the ordering associated with $r$.
For example, suppose
that the total order on the atoms associated with $r$ is
%adam4: this is backwards!
%$\neg p \land q > \neg p \land \neg q > p \land \neg q > p \land q$.
%joe4: I think that either way works, but this is fine.
%adam5: I mean it's backwards with respect to the language we use below, identifying "closest" with "minimal"
$\neg p \land q < \neg p \land \neg q < p \land \neg q < p \land q$.
Then if we take the action $\dop(p \lor q)$
starting at $r$, we
move to the child labeled with the atom $\neg p \land q$; if we
instead do $\dop 
(p \lor \neg q)$, we move to the child labeled $\neg p \land
\neg q$; and if instead we do \textbf{if $q$ then $\dop(p \lor q)$
  else $\dop(p \lor \neg q)$}, which of these two children we move to
depends on whether $q$ is true at the atom labeling $r$.  For an
action $\dop(p \lor q); \dop(p \lor \neg q)$, we move further down
the tree.  The first action,
$\dop(p \lor q)$, takes us to the child $t$ of $r$ labeled $\neg
p \land q$.  We then take the action $\dop(p \lor \neg q)$ from there,
which gets us to a child of $t$.  Which one we get to depends on the
ordering of the children of $t$
%adam4
associated with $t$.

It turns out that our states must be even richer than this; they must
in addition include a \emph{$k$-progress function} $g$ that maps
each node $t$ in a $k$-olt $s^k$ to a descendant of $t$ in $s^k$.  We
give the intuition behind progress functions shortly. We 
take $\Omega^k$ to consist of all pairs $(s^k,g)$, where $s^k \in T^k$ and $g$
is a $k$-progress function
%adam4: added. This is a bit subtle, but I think we need to treat each
%state (s^k,g) as making true whatever is entailed by the atom
%associated with g(r) (otherwise, essentially, f_\alpha isn't
%satisfying "success". Does that make sense? 
and for each primitive proposition $p \in \Phi$, we define
$$\val{p} = \{(s^k,g) \: : \: \textrm{$p \in a$, where $a$ labels
  $g(r)$ and $r$ is the root of $s^k$}\}.$$

We now want to associate with each depth-$k$ action $\alpha$ a function
$f_{\alpha}: \Omega^k \to \Omega^k$;
%adam4: added
intuitively, this is the transition on states that we want to be induced
by the selection function.
%adam5
%For now, we define $f_{\alpha}$ only on states of the form $(s^k,\id)$, where $\id$ is the identity function; this will suffice for our purposes.
To begin, we define $f_{\alpha}$ only on states of the form $(s^k,\id)$, where $\id$ is the identity function.
%adam2: [[IF NECESSARY, WE CAN DEFINE $f_{\alpha}(s^k,g) =
%f_{\alpha}(s^k,\id)$.]] 
We take $f_{\alpha}(s^k,\id) = (s^k,g_{\alpha,s^k})$, where 
$g_{\alpha,s^k}$ is defined formally below.
Intuitively, if $t$ is a node at depth $k'$ of $s^k$, then 
$g_{\alpha,s^k}(t)$ describes the final state if the action
$\alpha$ were to (possibly counterfactually) end up at the node $t$ after running
for $k'$ steps, and then continued running.

Given a $k$-olt $s^k$ whose root $r$ is labeled $a$ and an 
%joe3
%action $\alpha$,
action $\alpha$ of depth at most $k$,
we define $g_{\alpha,s^k}(t)$ by induction on the depth of $\alpha$.
%adam2: I'm not sure this is really a subinduction. It seems just like
%cases. 
%with a subinduction on the depth of $c_{\alpha}(a)$.
%joe3: simplified
%If $\alpha = \noop$, $g_{\alpha,s^k}(t) = t$ for all nodes $t$ in
%$s^k$. Suppose that $\alpha$ has
%adam4
%We
For the base case, we
take $g_{\noop,s^k} = id$.
%adam4
%Suppose that $\alpha$ has
Now suppose inductively that $\alpha$ has
%joe2: k is overloaded since we're talking about k-olts
%depth $k$ and we have defined $g_{\alpha',s^k}$ for all actions
%$\alpha'$ of depth $k-1$. If $c_{\alpha}(a) = \noop$, then
depth $m$ and we have defined $g_{\alpha',s^k}$ for all actions
$\alpha'$ of depth $m-1$.
%adam4: added to help organize things for the reader
There are three cases to consider. (1)
If $c_{\alpha}(a) = \noop$, then
%joe3
%$g_{\alpha,s^k}(t) = t$ for all nodes  $t$ in $s^k$.
$g_{\alpha,s^k} = id$.
%adam4: added
(2)
If $c_{\alpha}(a) = \dop(\phi_A)$, then $g_{\alpha,s^k}(r)$ is the 
%adam4
%child $t'$ of $r$ whose label is the best (i.e., minimal) atom in $A$
``closest'' (i.e., minimal) child $t'$ of $r$ among those labelled
by an atom in $A$,
according to the total order
%adam4
%labeling $r$, and
labeling $r$;
$g_{\alpha,s^k}(t) = t$ for all nodes $t \ne r$.
%adam4: added
(3)
If $c_{\alpha}(a) =
%adam4
%\dop(\phi);\beta$
\dop(\phi_{A});\gamma_\beta$
(which means $\beta$ is an action of depth at most $m-1$),
then $g_{\alpha,s^k}(r) = g_{\beta,s^{k,t'}}(t')$, where $t' =
%adam4
%g_{\dop(\phi),s^{k}}(r)$ and
g_{\dop(\phi_{A}),s^{k}}(r)$ and
$s^{k,t'}$ is
the $(k-1)$-subolt of $s^k$ rooted at $t'$.
%joe2: I'll leave this to you Adam :-)
%\draft{add some intuition here}
%adam4: I think this and what I added below covers it
The intuition here is that $g_{\alpha,s^k}(r)$ is supposed to output
the descendent of $r$ that is reached by doing $\alpha$; the fact that
$c_{\alpha}(a) = \dop(\phi_{A});\gamma_\beta$ tells us that the way
$\alpha$ works (in a state where $a$ holds) is by first making
$\phi_{A}$ true, and then following up with $\beta$. This means we
must first move to the ``closest'' child of $r$ where $\phi_{A}$
%adam6: correcting typo per reviewer 3
%holds, which is $t'$, and subsequently moving to whichever descendent
holds, which is $t'$, and subsequently moving to whichever descendant
of $t'$ that $\beta$ directs us to (which is defined, by the inductive
hypothesis). 
Finally, if $t \ne 
r$, let $t''$ be the first step on the (unique) path from $r$ to $t$ and
let $s^{k,t''}$ be the $(k-1)$-subolt of $s^k$ rooted at
$t''$. Then
$g_{\alpha,s^k}(t) = g_{\beta,s^{k,t''}}(t)$
%adam4
%(the latter is defined, by the induction hypothesis).
(where, once again, this is defined by the inductive hypothesis).
%adam4: added
This essentially forces us to ``follow'' the unique path from $r$ to
$t$, and then continue from that point by doing whatever the remaining
part of the action $\alpha$ demands. It is clear from this definition that if the root of $s^k$
is labeled by $a$, then $g_{\alpha,s^k} = g_{c_{\alpha}(a),s^k}$. 

%adam5: added
We now extend $f_{\alpha}$ to states of the form $(s^k, g_{\beta, s^k})$ by setting $f_{\alpha}(s^k, g_{\beta, s^k}) = f_{\beta;\alpha}(s^{k},id)$. Intuitively, the state $(s^k,g_{\beta, s^k})$ is a
state where $\beta$ has ``already happened'' (i.e., it's the state we would arrive at by doing $\beta$ in $(s^k, id)$) so doing $\alpha$ in this state should be the same as doing first $\beta$ then $\alpha$ in $(s^k, id)$.
 
Observe that a $k$-progress function $g_{\alpha,s^k}$
not only tells us the node that $\alpha$ would reach if it started
%adam4
%at a root of $s^k$,
at the root of $s^k$,
but also gives a great deal of counterfactual
information
%adam4: added
about which nodes would be reached starting from anywhere in $s^k$. 
This is in the same spirit as \emph{subgame-perfect
equilibrium} \cite{Selten75}, which
%adam2
%can be affected by states that are not supposed to be reached by a
%strategy, but can be reached as the result of small ``trembles''. It
%seems that we need the counterfactual information to get a
%representation theorem. 
can depend on what happens at states that are never actually reached
in the course of play, but could have been reached if play had gone
%joe2
%differently---similar to this game-theoretic notion, our
differently. Like this game-theoretic notion, our
representation theorem requires a kind of counterfactual information. 

In light of (\ref{eq1}), to prove (\ref{eq1.5}), it suffices to define
%adam3*: this is a bit quick... is it that we want to define sel so
%that this equation holds?
%joe3*: You're right! Do you think we need to say more (i.e., show
%that such a sel can always be defined?) It's pretty clear from the
%construction.  Formally, we would have to do show that the obvious
%definition works: that is, we define sel in terms of f_{\dop(\alpha}}
%and then argue that it works. We probably should say a few words here.
%adam4: agreed; see below
%adam4: added
our selection function
$\sel$ so that
$\val{\alpha}_{M,\sel} = f_{\alpha}$, and find $\Pr^k$ and $u^k$ such that for all
actions $\alpha$ of depth $k$,
\begin{equation}\label{eq2}
\sum_{i=1}^N v^k(a_i,c_{\alpha}(a_i)) = \sum_{(s^k,g) \in \Omega^k} {\Pr}^k(s^k,g)
u^k(f_{\alpha}(s^k,g)).
\end{equation}

%adam4: added
Our definition of $f_{\alpha}$ is set up to make defining the right
selection function straightforward: we simply set $\sel((s^k,g),\phi)
= f_{\dop(\phi_{A})}(s^k,g)$, where $A$ is the unique set of atoms
such that $\models \phi_{A} \liff \phi$.
%adam5: added
It is then easy to check that $\val{\alpha}_{M,\sel} = f_{\alpha}$.
%adam4*: I think this is what you meant. But I think it's not quite
%right. We haven't actually defined f on (s^k,g) for g \neq id... and
%I think we need to in order to get f to be induced by a selection
%function. Remember that in selection models, sequential actions are
%given their semantics via function composition, which is why
%(intuitively) in order for \sel to induce f_{\dop(\phi);\dop(\psi)},
%we need to correctly specify what $f_{\dop(\psi)}$ does at the state
%f_{\dop(phi)}(s^k, id). Does that make sense? I don't think this is
%too hard, though. Intuitively, the state (s^k,g_{\alpha, s^k}) is a
%state where \alpha has "already happened". So we could just define
%f_{\beta}(s^k, g_{\alpha, s^k}) = f_{\alpha;\beta}(s^{k},id), right? 
%joe4: Yes, that should be right.

Define $\Pr^k(s^k,g) = 0$ if $g \ne \id$, and 
$\Pr^k(s^k,\id) = 1/|T^k|$ for all $s^k \in T^k$.
%adam4: As discussed above, I'm not sure this is the case, so I
%removed it for now. Feel free to add it back in if I'm wrong. 
%(Note that because $\Pr^k(s^k,g) = 0$ if $g \ne \id$, we do not have
%to worry about how $f_\alpha(s^k,g)$ is defined when $g \ne \id$.) 
Given this, to
%adam3
%prove (\ref{eq2}) it suffices to prove that
establish (\ref{eq2}), it suffices to define $u^{k}$ such that for all
actions $\alpha$ of depth $k$, 
\begin{equation}\label{eq4}
|T^k|\sum_{i=1}^N v^k(a_i,c_{\alpha}(a_i)) = \sum_{s^k \in T^k}
  u^k(f_{\alpha}(s^k,\id)).
  \end{equation}

Given an atom $a$, let $T^k_a$ consist of all $k$-olts whose
root is labeled by $a$.
By definition of $f_{\alpha}$, to prove (\ref{eq4}), it suffices to
prove, for each 
atom $a \in \{a_1, \ldots, a_N\}$ and all actions $\alpha$ of depth $k$, that
\begin{equation}\label{eq3}
|T^k|v^k(a,c_{\alpha}(a)) = \sum_{s^k \in T^k_{a}} u^k(s^k,g_{\alpha,s^k}) = \sum_{s^k \in T^k_{a}} u^k(s^k,g_{c_{\alpha}(a),s^k}),
\end{equation}
%adam2: added
where the second equality follows from the fact, observed above, that
$g_{\alpha,s^k} = g_{c_{\alpha}(a),s^k}$ whenever $s^k \in T_a^k$. 

%adam2: removed, I merged this observation with the above.
%One more simplification: By definition, $f_p(a) = p^a$.  Moreover, as we observed, $g_{\alpha,s^k} = g_{c_{\alpha}(a),s^k}$ if $s^k \in T_a^k$. Thus, it suffices to show that
%\begin{equation}\label{eq3}
%|T^k|v^k(a,c_{\alpha}(a)) = \sum_{s^k \in T^k_{a}} u^k(s^k,g_{c_{\alpha}(a),s^k}).
%\end{equation}

Since $v^k$ is given, for each depth-$k$ action $c_{\alpha}(a)$, the left-hand
side of (\ref{eq3}) is just a number.  
Replace each term $u^k(s^k,g_{c_{\alpha}(a),s^k})$
for $s^k \in T^k_a$
by the variable $x_{s^k,g_{c_{\alpha}(a),s^k}}$.
This gives us a system of linear equations, one for each action
$c_{\alpha}(a)$, with 
variables $x_{s^k,g}$, where the coefficient of $x_{s^k,g}$ in the equation
%joe2*: This doesn't seem right, although at one point I convinced
%myself it was
%corresponding to action $\alpha$ is the number of $k$-olts $s^k$ such that
%$f_{\alpha}(s^k,\id) = g_{s^k,\alpha} = g$. We want to show that this
%adam4: I think the current version is correct. The variables x are
%indexed in a way that includes s^k, right? So different k-olts will
%correspond to different variables. I assume that's why you changed
%this?
%joe4: That's exactly right
corresponding to action $\alpha$ is either 1 or 0, depending on whether
$g_{\alpha,s^k} = g$. We want to show that this system has a solution. 

%joe2
%We can describe the relevant equations as the product of matrices $MX = U$,
We can describe the relevant equations as the product $MX = U$ of matrices,
%joe2*: I had convinced myself at one point that you could get
%entries other than 0 or 1, but now I don't believe it. Please check!
%where $M$ is a matrix whose entries are natural numbers, and $X$ is a
where $M$ is a matrix whose entries are either 0 or 1, and $X$ is a
vector of variables (namely, the variables $x_{s^k,g}$).
The matrix $M$ has one row corresponding
%joe2
%to each program, and one column corresponding to each state $(s^k,g)$
to each
%adam3: as above, I don't think these are canonical actions as we defined them
%canonical action in $CA^k$
action
%joe3
%of the form $\noop$,$\dop(\phi_A)$, or $\dop(\phi_A);\gamma_\beta$
in $CA^{k,-}$
(since, for all actions 
%joe3
%$\alpha$, $c_\alpha(a)$ must have this form), and one column corresponding to
$\alpha$ of depth $k$, $c_\alpha(a) \in CA^{k,-}$), and one column
corresponding to 
each state $(s^k,g)$ 
with $s^k \in T^k_a$.  
%joe2
%The entry in $M$ in the row corresponding to each action $c_{\alpha}(a)$
The entry in $M$ in the row corresponding to the action $\gamma_{\alpha}$
and the column corresponding to $(s^k,g)$ 
%joe2*
%is the number of $k$-olts $s^k$
 % such that $g_{c_{\alpha}(a),s^k} = g$.
is 1 if $g_{\gamma_a,s^k} = g$ (i.e., if $f_\alpha(s^k,id) =
(s^k,g_{\alpha,s^k}) = (s^k,g)$) and 0 otherwise.
A basic result of linear algebra tells us that
this system has a
solution if the rows of the matrix $M$ (viewed as vectors) are
independent.  We now show that this is the case. 

%joe3: already define
%Let $CA^{k,-}$ consist of all actions in $CA^k$ of the form
%$\noop$, $\dop(\phi_A)$, or $\dop(\phi_A);\gamma_\beta$.
Let $r_\alpha$ be the row of $M$ indexed by action $\alpha \in CA^{k,-}$.  
Suppose that a linear combination of rows is 0; that is,
$\sum_{\alpha} d_{\alpha} r_{\alpha} = 0$,
%adam2
for some scalars $d_{\alpha}$.
The idea is to put a partial order $\sqsupset$ on
%joe2
%depth-$k$ actions of the form $c_{\alpha}(a')$ for some atom $a'$; we
%say that actions in this form are in \emph{normal form}.
$CA^{k,-}$
%(it is a preorder rather than order since there may be ties),
%joe2
%We show by induction on $\sqsupset$ that for all $\alpha$ in normal form,
and show by induction on $\sqsupset$ that for all $\alpha \in CA^k$,
the coefficient $d_\alpha = 0$. 
 
We define $\sqsupset$ as follows. We take $\noop$ to be the minimal
element of $\sqsupset$. For actions $\alpha = \dop{\phi_A}; \gamma_\beta$ and 
$\alpha' = \dop(\phi_{A'}); \gamma_{\beta'}$ (where we take
$\gamma_\beta$ to be $\noop$ if 
$\alpha = \dop(\phi_A)$ and similarly for $\gamma_{\beta'}$), $\alpha \sqsupset \alpha'$ iff either
%adam2
%(1) $A \supset A'$ (where $\supset$ denotes strict superset),
(1) $A \supsetneq A'$,
(2) $A = A'$,
$\beta \ne \noop$, and $\beta' = \noop$, or (3) $A = A'$,
$c_{\beta}(a) \sqsupset c_{\beta'}(a)$ for all atoms $a$.

We show that $d_\alpha = 0$ by induction assuming that $d_{\alpha'} = 0$
for all actions $\alpha' \in CA^{k,-}$ such that $\alpha \sqsupset \alpha'$.
For the base case, $\alpha = \noop$. 
Consider the $k$-progress function $g_\noop^k$ such that $g_\noop^k(t)
= t$ for all nodes $t$ in a $k$-olt. Note that 
%joe2: to respond to your question below, yes! I corrected this
%$f_\noop(s^k,\id) = g^k_\noop$,
$g_\noop(s^k,\id) = g^k_\noop$ for all $k$-olts $s^k$.
%adam2: I'm a bit confused by this. instead of g^k_\noop, do you mean
%(s^{k}, g^k_{\noop, s^{k}})?
%joe2: making notation more consistent and correcting
%and it is easy to see that if $p \ne \noop$,
%then for all $k$-olts $s^k$,  $f_p(s^k,\id) \ne g_\noop^k$.  It 
It is easy to see that if $\beta$ has the form $\dop(\phi_A)$ or
$\dop(\phi_A);\gamma_{\beta'}$, 
then for all $k$-olts $s^k$,  $g_{\beta,s^k} \ne g_\noop^k$ (since for
the root $r$ of $s^k$, $g_{\beta,s^k}(r) \ne r$).
%joe2: added sentence
  Thus, the entry of $r_\noop$ corresponding to the column
  $(s^k,g_\noop^k)$ is 1, while the entry of $d_\beta$ for $\beta \ne \noop$
  corresponding   to this column is 0.  
It follows that $d_\noop = 0$. 

For the general case, suppose that we have an arbitrary action $\alpha \ne
\noop$ in
$CA^{k,-}$  and $d_{\alpha'} = 0$ for all $\alpha' \in CA^k$ such that
$\alpha \sqsupset \alpha'$.
%joe2: this is true, but doesn't seem relevant
%Note that if $\alpha \in CA^k$, then $c_{\alpha}(a) = \alpha$.
We now define a $k$-olt $s^{k,\alpha} \in T^k_a$ such that if
$g_{\alpha',s^{k,\alpha}} = g_{\alpha,s^{k,\alpha}}$ and $\alpha \ne \alpha'$, then $\alpha \sqsupset \alpha'$, so
%joe2
%$d_{p'} = 0$ by the induction hypothesis.  It follows that $d_{\alpha} = 0$
$d_{\alpha'} = 0$ by the induction hypothesis.  Once we show this, it
follows that $d_{\alpha} = 0$ 
(since otherwise the entry in $\sum_{\alpha'} d_{\alpha'} r_{\alpha'}$ corresponding to 
$g_{\alpha,s^{k,\alpha}}$ would be nonzero). We construct $s^{k,\alpha}$ by induction
on the depth of $\alpha$.
%adam6: reviewer 3 asks for the induction hypothesis to be formulated explicitly here...
If $\alpha$ has depth $1$ and is not $\noop$, it must
be of the form 
%joe2
%$\dop(\phi_A)$. Suppose that $b \in A$.  Let the total order at the
$\dop(\phi_A)$ for some set $A$ of atoms. Suppose that $b \in A$.  Let
the total order at the 
root of $s^{k,\alpha}$ be such that the final elements in the order are the
elements in $A$, and $b$ is 
the first of these.  For example, if $A = \{b,c,d\}$, we could consider
an order where the final three elements are $b$, $c$, and $d$ (or
$b$, $d$, and $c$).  Note that if $r$ is the root of $s^{k,\alpha}$, then
$g_{\alpha,s^{k,\alpha}}(r)$ is the child $t_b$ of $r$ labeled $b$.  
Now consider an action $\alpha'$ of the form
$\dop(\phi_{A'}); \beta$ ($\beta$ may be $\noop$). If $A'$ contains an element
not in $A$, then $g_{\alpha,s^{k,\alpha}}(r) \ne t_b$ (because there will be an
atom in $A'$ that is greater than $b$ in the total order at $r$).
If $A' \subset A$, then $\alpha \succ \alpha'$, as desired.  And if $A
= A'$ and 
%joe2
%$\alpha \ne \alpha'$, then $\alpha' = \phi_A;\beta$ and $\beta \ne
$\alpha \ne \alpha'$, then $\alpha' = \phi_A;\gamma_\beta$ and $\beta \ne
\noop$, so it is easy to 
see that $g_{\alpha',s^{k,\alpha}}(r) \ne t_b = g_{\alpha,s^{k,\alpha}}(r)$.

%joe2: k is overloaded
%Suppose that $k>0$ and we have constructed $s^{k,\beta}$ for all
%actions $\beta$ of depth less than $k$.  We now show how to construct
%$s^{k,\alpha}$ for actions $\alpha$ in normal form of depth $k$.
%Suppose that $\alpha$ has the form $\dop(\phi_A);\beta$.
Suppose that $m>1$ and we have constructed $s^{k,\beta}$ for all
actions $\beta \in CA^{k,-}$ of depth less than $m$.  We now show how to construct
$s^{k,\alpha}$ for actions $\alpha \in CA^{k,-}$ of depth $m$ that are not
of depth $m-1$ .  This means that $\alpha$ must have the form
$\dop(\phi_A);\beta$.  
We construct the total order at $r$ as above,
and at the subtree of $s^{k,\alpha}$ whose root is the child of
$r$ labeled $a$, we use the same orderings as in
$s^{k-1,c_{\beta}(a)}$, which by the induction hypothesis we have already
determined. It now follows easily from the induction hypothesis that
if $g_{\alpha',s^{k,\alpha}} = g_{\alpha,s^{k,\alpha'}}$ and $\alpha \ne \alpha'$, then $\alpha \sqsupset \alpha'$.
%adam2: at some point we might want to consider adding more detail here
This completes the argument for (\ref{eq3}).

The argument above gives us a representation theorem for each $k$ that
works for actions of depth $k$.
However, we are interested in a single representation theorem that
works for all actions of all depths simultaneously. The first step is to make the state-dependent
utility functions $v^1, v^2, \ldots$ that we began with
(one utility function for each $k$ in the argument above)
\emph{$v$-compatible}, in the sense that if $\alpha$ is a depth-$k$
action and $k' > k$, then $v^k(a,c_{\alpha}(a)) =
v^{k'}(a,c_{\alpha}(a))$. That is, 
we want to construct a sequence $(v^1, v^2, v^3, \ldots)$
of $v$-compatible utility functions, each of which satisfies (\ref{eq1}).
We proceed as follows.

We can assume without loss of generality that each utility function has
range in $[0,1]$, by applying an affine transformation.  (Doing this
would not affect (\ref{eq1}).)  For each utility function $v^k$
let $v^{ki}$, for $i \le 
k$, be the restriction of $v^k$ to actions of depth $i$. Thus, $v^{kk} =
v^k$.  Now consider the sequence $v^{11}, v^{21}, v^{31}, \ldots$   It must
have a convergent subsequence, say $v^{m_1,1}, v^{m_2,1}, v^{m_3,1},
\ldots$. Say it converges to $w^1$. Now consider the subsequence
$v^{m_2,2}, v^{m_3,2}, \ldots$.  (We omit $v^{m_1,2}$, since we may
have $m_2=1$, in which case $v^{m_1,2}$ is not defined.) It too has
a convergent 
subsequence.  Say it converges to $w^2$.  Continuing this 
process, for each $k$, we can find a convergent subsequence, which is
a subsequence of the sequence we found for $k-1$.  It is easy to check
that the limits $w^1, w^2, w^3, \ldots$ of these convergent
subsequences satisfy (\ref{eq1}) and are $v$-compatible (since, in
general, $v^{ki}$ is $v$-compatible with $v^{kj}$ for $i,j \le k$).
For the remainder of this discussion, we assume without loss of
generality that the utility functions
in the sequence $v^1, v^2, \ldots$ are $v$-compatible.

Note that it follows easily from our definition that probability measures in
the sequence $\Pr^1, \Pr^2, \ldots$ are \emph{$\Pr$-compatible} in the following
sense:
%adam2: done, though feel free to change it if you don't like what I did!
%[[NOTE: THERE ARE ULTIMATELY GOING TO BE THREE NOTIONS OF COMPATIBILITY; WE SHOULD PROBABLY USE A SLIGHTLY DIFFERENT NAME FOR EACH.]]
%joe2
%If $k' > k$ and $(s^k,\id) \in \Omega^k$, 
If $k' > k$, $(s^k,\id) \in \Omega^k$, and
$E^{k'}(s^k,\id)$ consists of all the pairs $(t^{k'},\id)$ such that
%joe2
%$s^k$ is the projection of $t^{k'}$, then $\Pr^{k}(s^k,\id) =
$s^k$ is the projection of $t^{k'}$ onto depth $k$, then $\Pr^{k}(s^k,\id) =
\Pr^{k'}(E^{k'}(s^k,\id))$.
%[[POINT OUT THAT THIS FOLLOWS DUE TO THE FACT THAT THE PROBABILITY IS UNIFORM + SOME COMBINATORICS.]]
%adam2: more detail needed here at some point
We will also want a third type of compatibility among the
utility functions. To make this precise, define a $k'$-progress function
$g$ to be \emph{$k$-bounded} for $k < k'$ if 
for all nodes $t$ of depth $\le k$, we have that $g(t)$ has depth $\le
k$, and if the depth of $t$ is greater than $k$, then $(t) = t$.
Note that if $\alpha$ is a depth-$k$ action, then $g_{\alpha,s^{k'}}$ is
$k$-bounded. If $k' > k$ and $g$ is a $k$-bounded $k'$-progress
function, then $g$ has an obvious projection to a $k$-progress function.
We want the utility functions in the sequence $u^1, u^2, \ldots$
that satisfies (\ref{eq2}) to be \emph{$u$-compatible} in the
following sense: if $g'$ is a $k'$-progress function that is
$k$-bounded, $g$ is the projection of $g'$ to a $k$-progress function,
and $s^k$ is the projection of $t^{k'}$ onto depth $k$, then
$u^k(s^k,g) = u^{k'}(t^{k'},g')$.
%adam2: add intuition here at some point
We can assume without loss of generality that the utility functions
in the sequence $u^1, u^2, 
\ldots, $ are $u$-compatible. For given a sequence $u^1, u^2, \ldots$,
define the sequence $w^1, w^2, \ldots$ as follows. Let $w^1 = u^1$.
Suppose that we have defined $w^1, \ldots, w^k$.  If the
$(k+1)$-progress function $g'$ is $k$-bounded, 
define $w^{k+1}(t^{k+1},g')
= w^k(s^k,g)$, where $s^k$ is the projection of $t^{k+1}$ onto depth $k$
and $g$ is the projection of $g'$ to a $k$-progress function;
if $g$ is not $k$-bounded, define $w^{k+1}(t^{k+1},g') =
u^{k+1}(t^{k+1},g')$.  Clearly the sequence $w^1, w^2, \ldots$ is
$u$-compatible. Moreover, it is easy to check that $(\Pr^k,w^k)$
satisfies (\ref{eq2}).  
  
We are now ready to define a single state space.
Define an
\emph{$\infty$-olt} just like a $k$-olt, except that now the tree is
unbounded, rather than having depth $k$.  
Let $\Omega^\infty$ consist of all pairs $(s^\infty,g)$, where
$s^\infty$ is an $\infty$-olt and $g$ is a $k$-bounded progress
function for some $k$. This will be our state space. Define
$E^{\infty}(s^k,\id)$ by obvious analogy to $E^{k'}(s^k,\id)$: it consists of all
the pairs $(t^{\infty},\id)$ such that $t^\infty$ extends $s^k$.
%adam2*: we still need to check this!
%I'm pretty sure that Cartheodory's extension theorem says that
%joe2
%Then, by Cartheodory's extension theorem,
Then, by Carath{\' e}odory's extension theorem \cite{Ash70}
there
is a measure $\Pr^{\infty}$ on the smallest $\sigma$-algebra extending
the algebra generated by the sets $E^{\infty}(s^k,\id)$ which agrees
with $\Pr^k$ for all $k$ (i.e., $\Pr^k(s^k,\id) = 1/|T^k| =
\Pr^{\infty}(E^{\infty}(s^k,\id))$. Let
$u^{\infty}$ be defined by taking $u^\infty(s^{\infty},g) =
u^k(s^k,g^k)$ if $g$ is $k$-bounded and $s^k$ is the unique $k$-olt
that $s^\infty$ extends. It is easy to check that this is
well-defined (note that if $g$ is $k$-bounded then $g$ is $k'$-bounded
for $k' > k$, so there is something to check here). Finally, 
it is easy to check that for a depth-$k$ action $\alpha$, we have that
the expected utility of $\alpha$ is
$$\sum_{(s^k,g) \in \Omega^k} {\Pr}^\infty(E^\infty(s^k,\id))
u^\infty(s^k,g_{\alpha,s^k}) = \sum_a v^k(a,c_{\alpha}(a)),$$
giving us the desired result.
\end{proof}

%adam6: reviewer 3's suggestion
%\section{Further work}
\section{Conclusion and Future Work}

%adam2: this is just a very rough dump of some of my initial thoughts
%joe2: slight rewrite
%We believe that this proof can serve as a template for further results
%in which specific properties of the selection function, familiar from
%the literature on counterfactual conditions (e.g., being
%\emph{centered}, so that $\sel(\omega,\phi) = \omega$ whenever $\omega
%\models \phi$), are guaranteed to hold in the selection model
%constructed in the representation theorem as a consequence of adding
%some additional axioms governing the relation of preference,
%$\succeq$. 

We have extended the results of \cite{BH21} to allow for actions that
are composed of sequences of steps, and proved a representation
theorem in this setting.
%adam6: added, following reviewer 3's suggestion
More precisely, we have shown that when an agent's language-based
preferences satisfy a suitably formulated cancellation axiom, they are
%joe5
%acting as if they are an expected utility maximizer---with respect to
acting as if they are an expected utility maximizer with respect to
some background state space $\Omega$, a probability and utility over
$\Omega$, and a selection function on $\Omega$ that serves to
``disambiguate'' the results of actions described in the language. 
Allowing for (possibly unbounded) sequences
of steps made the proof significantly more complicated.

In \cite{BH21}, we also considered axioms regarding the preference
order $\succeq$ that restricted properties of
the selection function in ways that are standard in 
the literature on counterfactual conditions (e.g., being
\emph{centered}, so that $\sel(\omega,\phi) = \omega$ whenever $\omega
%joe2*: We should do this!
\models \phi$).  Although we have not checked details yet, we believe
it will be straightforward to provide axioms that similarly restrict
the selection function in our setting, and to extend the representation
theorem appropriately. 
%constructed in the representation theorem as a consequence of adding
%some additional axioms governing the relation of preference,
%$\succeq$. 

We also believe it is of interest in some contexts to consider more
complex sequential actions, such as ``do $\phi$ until $\psi$''. This
opens the door for potentially \emph{non-terminating} actions, which
%joe2: I thought  I had a proof that dealt with this, but I now think
%that the general case will be quite complicated.
%of courses adds some complexity to the analysis.
%adam6: fixed typo per reviewer 3
%of courses will add further complexity to the analysis.
of course will add further complexity to the analysis.

Finally, and perhaps most urgently, while the cancellation axiom is
quite amazing in the power it has, it is not particularly
%joe2
%intuitive. Notably, more intuitive axioms can be derived \textit{from
%  it}, such as transitivity of the relation $\succeq$ or the classic
intuitive.
%adam6*: reviewer 1 asks for elaboration on why Canc is not intuitive...
As shown in \cite{BEH06}, more intuitive axioms can be derived from
cancellation, such as transitivity of the relation $\succeq$ or the classic
principle of \emph{independence of irrelevant alternatives} (see
\cite{Savage}). In order to bring the technical results of this
project more in line with everyday intuitions about preference, it
would be very beneficial to ``factor'' the cancellation axiom into
weaker, but easier to intuit, components. This is the subject of
ongoing research.

\section*{Acknowledgments}
Joe Halpern was supported in part by 
ARO grant W911NF-22-1-0061, MURI grant W911NF-19-1-0217, AFOSR grant
FA9550-12-1-0040, and a
grant from the Algorand Centres of Excellence program managed by
Algorand Foundation.  Any opinions, findings, and conclusions or
recommendations expressed in this material are those of the author(s)
and do not necessarily reflect the views of the funders.

\bibliographystyle{eptcs}
\bibliography{z,joe}
\end{document}